\newtheorem{theorem}{Theorem}
\theoremstyle{definition}
\newtheorem{definition}{Definition}
\theoremstyle{remark}
\newtheorem{remark}{Remark}
\theoremstyle{definition}
\newtheorem{assumption}{Assumption}
\theoremstyle{definition}
\newcommand{\R}{\mathbb{R}}
\newcommand{\C}{\mathcal{C}}
\definecolor{darkblue}{RGB}{0,0,102}
\definecolor{lightblue}{RGB}{77,77,148}
\definecolor{gold}{RGB}{234, 170, 0}
\definecolor{metallic_gold}{RGB}{139, 111, 78}
\newcommand{\mb}[1]{\mathbf{ #1 }}
\newcommand{\grad}{\nabla}
\DeclareMathOperator{\atan}{atan}
\DeclareMathOperator*{\argmin}{arg\,min}
\newcommand*{\SetSuchThat}[1][]{} 
\newcommand*{\MvertSets}{%
    \renewcommand*\SetSuchThat[1][]{%
        \mathclose{}%
        \nonscript\;##1\vert\penalty\relpenalty\nonscript\;%
        \mathopen{}%
    }%
}
\DeclarePairedDelimiterX \Set [2] {\lbrace}{\rbrace}
    {\,#1\SetSuchThat[\delimsize]#2\,}
\def\BibTeX{{\rm B\kern-.05em{\sc i\kern-.025em b}\kern-.08em
    T\kern-.1667em\lower.7ex\hbox{E}\kern-.125emX}}
\begin{document}

\title{\textbf{Safe Navigation under State Uncertainty: \\ Online Adaptation for Robust Control Barrier Functions}}

 \author{Ersin Da\c{s}$^{1}$, Rahal Nanayakkara$^{2}$, Xiao Tan$^{1}$, Ryan M. Bena$^{1}$, \\ Joel W. Burdick$^{1}$, Paulo Tabuada$^{2}$, and Aaron D. Ames$^{1}$
 \thanks{This work was supported in part by TII under project \#A6847 and in part by DARPA under the LINC Program. \textit{(Corresponding author: Ersin Da\c{s}.)}}
 \thanks{$^{1}$E. Da\c{s}, X. Tan, R. M. Bena, J. W. Burdick, and A. D. Ames are with the Department of Mechanical and Civil Engineering, California Institute of Technology, Pasadena, CA 91125, USA, \texttt{\{\!ersindas,\!xiaotan,\!ryanbena,\!jburdick,\!ames\!\}\!@caltech.edu}}
 \thanks{$^{2}$R. Nanayakkara and P. Tabuada are with the Electrical and Computer Engineering Department, University of California at Los Angeles, Los Angeles, CA 90095, USA, \texttt{rahaln@ucla.edu, tabuada@ee.ucla.edu}} 
 }

\maketitle

\begin{abstract}
Measurements and state estimates are often imperfect in control practice, posing challenges for safety-critical applications, where safety guarantees rely on accurate state information. In the presence of estimation errors, several prior robust control barrier function (R-CBF) formulations have imposed strict conditions on the input. These methods can be overly conservative and can introduce issues such as infeasibility, high control effort, etc. This work proposes a systematic method to improve R-CBFs, and demonstrates its advantages on a tracked vehicle that navigates among multiple obstacles. A primary contribution is a new optimization-based online parameter adaptation scheme that reduces the conservativeness of existing R-CBFs. In order to reduce the complexity of the parameter optimization, we merge several safety constraints into one unified numerical CBF via Poisson’s equation. We further address the dual relative degree issue that typically causes difficulty in vehicle tracking. Experimental trials demonstrate the overall performance improvement of our approach over existing formulations.
\end{abstract}

\begin{IEEEkeywords}
Safe navigation, constrained control, control barrier functions, state uncertainty, robust nonlinear control
\end{IEEEkeywords}


\section{Introduction} 
\label{sec:intro}

\IEEEPARstart{T}{he} increasing use of autonomous robots in complex environments emphasizes the need for safety-critical control \cite{lindemann2023safe}. Control barrier functions (CBFs) \cite{ames2017control} provide conditions, by leveraging system dynamics, that render a prescribed safe set forward invariant and, in practice, lead to safety filters that minimally modify the nominal input only when necessary to maintain safety. However, model uncertainties, including external disturbances, time delays, and state estimation errors, can lead to safety violations in real-world control applications \cite{dacs2025robust}. For instance, modern mobile robots and uncrewed aerial vehicles often rely upon visual-inertial odometry (VIO) for localization and navigation. Under VIO, the uncertainty in the vehicle's location estimate grows between global positioning updates, and is an ever-present source of state estimation error that can render safety guarantees meaningless unless they are robustified against these persistent errors.

\begin{figure}[t]
    \centering    
    \vskip  1.7mm
    \includegraphics[width=\linewidth]{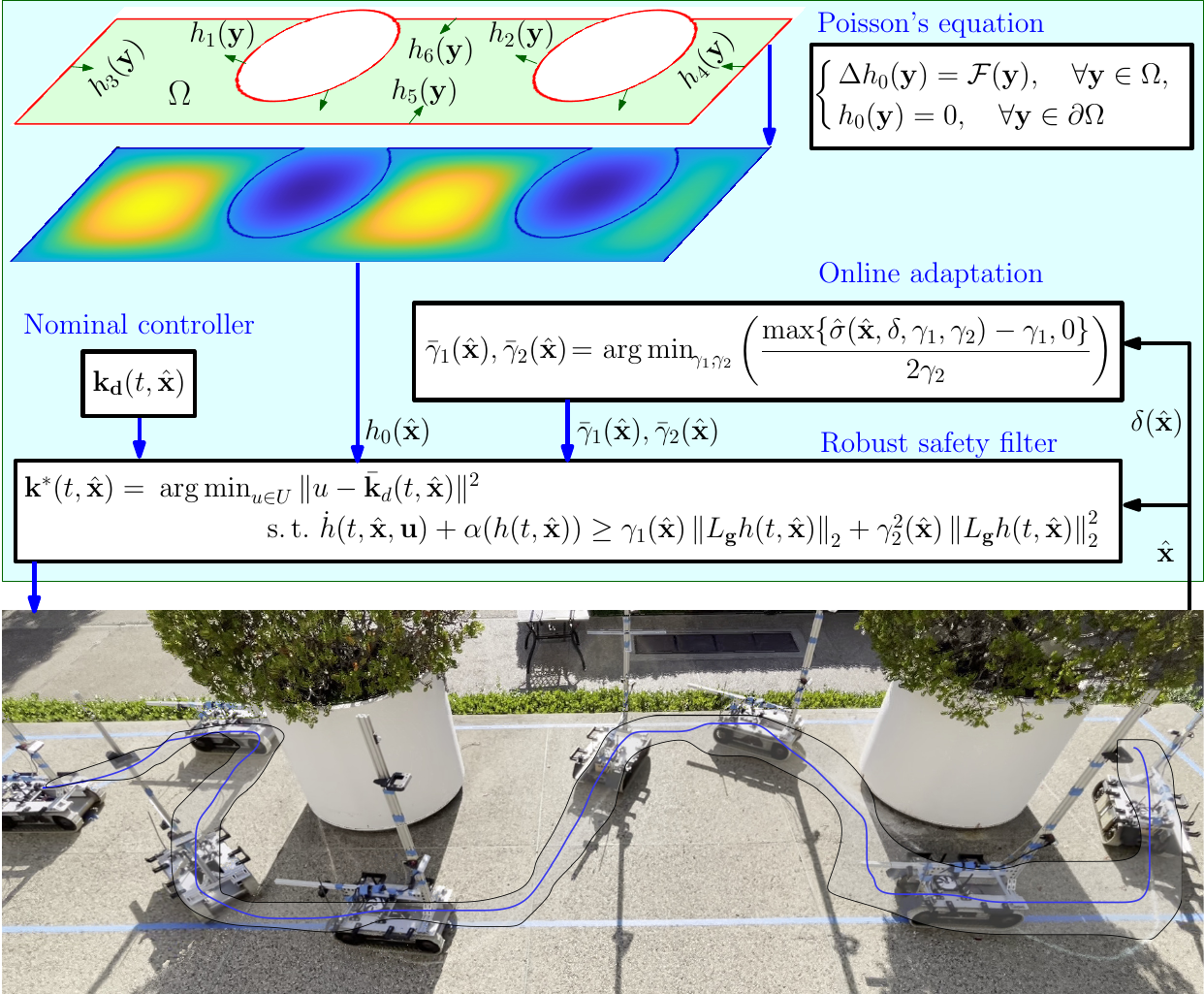}
    \caption{The proposed robust CBF (R-CBF) method in an outdoor safety-critical navigation scenario with a tracked mobile robot. Our R-CBF-QP-based safety filter maintains safety under state estimation uncertainty.}
    \label{fig: env}
\end{figure}

Early robust CBF methods addressed unmodeled dynamics through input‑to‑state safety margin \cite{alan2021safe}. Later formulations ensure safety by introducing adaptive robustness margins to reduce conservatism \cite{lopez2020robust, dacs2025robust}. One recent approach \cite{compton25a} learns a predictive robustness term that is updated online to tighten the CBF inequality whenever the full-order robot dynamics model deviates from its reduced-order model, assuming accurate state or output measurements. To enhance robustness in safety constraints, \cite{kim2024learning} presents a learning-based framework that enables online adaptation of class-$\mathcal{K}_{\infty}$ functions for systems under input constraints.

Beyond model uncertainty, robustness to measurement errors is notoriously challenging because CBF conditions are evaluated on the estimated state. This problem has been approached by using more conservative conditions \cite{dean2021guaranteeing, cosner2021measurement}, by considering specific types of state estimators \cite{wang2022observer, agrawal2022safe}, or by adopting stochastic approaches \cite{yang2023safe}. Although effective in specific control problems, each of these methods imposes specific requirements on the state estimator or uncertainty bounds, which may not be applicable to some real-world applications. To overcome these limitations, a concept for robust CBFs (R-CBFs) is introduced in \cite{rahal_cdc_2025} to ensure safety in the presence of state uncertainty, without requiring prior knowledge of the uncertainty magnitude. Unlike the approach proposed in \cite{agrawal2022safe, wang2022observer}, R-CBFs do not depend on specific classes of state observers. Nevertheless, R-CBFs account for estimation errors through conditions with constant robustness parameters, which may lead to overcompensation for uncertainty, ultimately reducing system performance in an undesirable manner.

In this paper, we present a safety-critical control framework that is robust to state estimation uncertainty, with a focus on practical implementation. We build upon the R-CBF formulation in \cite{rahal_cdc_2025} and propose an optimization-based procedure for the online tuning of the robustifying parameters in the R-CBF condition. This strategy enables the controller to adaptively guarantee robustness to state uncertainty by optimizing the R-CBF parameters using perturbations sampled locally about the current state estimate.

In order to account for multiple safety specifications---which commonly arise in real-world autonomy problems \cite{jang2024safe, long2021learning, dai2023safe, lindemann2018}---and reduce the computational complexity of the online parameter optimization, we consolidate them into a single CBF using Poisson safety functions \cite{bahati2025poisson}, \cite{bena2025geometry}. This also facilitates the generation of safe sets from perception data. Rather than composing safety constraints with approximations, the Poisson safety function provides a smooth scalar field by solving Poisson’s equation with Dirichlet boundary conditions defined by the unsafe and safe set boundaries.  

We demonstrate the effectiveness of our approach through hardware experiments on a tracked mobile robot, where state estimation is provided by VIO. We model the robot as a unicycle system.  Our objective is to provide collision safety even under the time-varying uncertainty introduced by the VIO system. The linear and angular velocity inputs to this vehicle affect the safety function at two different orders of differentiation, resulting in \textit{dual relative degree} behavior. Inspired by \cite{bahati2025control}, we first design a CBF for the single integrator sub-system, then we leverage a Lyapunov function to synthesize a CBF that has a relative degree of one with respect to both inputs for the nonlinear unicycle system. The synthesized CBF explicitly incorporates the effect of angular velocity into the safety condition, overcoming a limitation of conventional CBFs for the unicycle system, where angular velocity does not contribute to the satisfaction of the safety specifications.

We organize this paper as follows. Section~\ref{sec:pre} introduces preliminaries. Section~\ref{sec:theory} presents an online adaptation method for the robustifying parameters of R-CBFs, aimed at reducing conservativeness. Section~\ref{sec:imple} presents simulations and experiments, while Section~\ref{sec:conc} concludes the paper.

\section{Preliminaries} 
\label{sec:pre}

\subsection{Control Barrier Functions}
We consider a nonlinear control-affine system:
\begin{align} 
\label{eq:affine-dynamics}
    \dot{\mb x } = \mb{f}(\mb{x}) + \mb{g}(\mb{x}) \mb{u}, \
    \mb{x} \in \mathcal{X} \subseteq \R^n, \
    \mb{u} \in \mathcal{U} \subseteq \R^m,
\end{align}
where the drift dynamics ${\mb f \!:\! \mathcal{X} \!\to\! \R^n}$ and the actuation matrix ${ \mb g \!:\! \mathcal{X} \!\to\! \R^{n \times m}}$ are locally Lipschitz continuous functions. We assume the set of admissible control inputs $\mathcal{U}$ is an $m$-dimensional convex polytope. For a given time-varying feedback controller ${\mb k \!:\! \mathbb{R}_{\geq 0} \!\times\! \mathcal{X} \!\to\! \mathcal{U}}$, which is piecewise continuous in $t$ and locally Lipschitz continuous in $\mb x$, the closed-loop system, ${\mb f_{\rm cl} \!:\! \mathbb{R}_{\geq 0} \!\times\! \mathcal{X} \!\to\! \R^n}$:
\begin{equation}
\label{eq:clsystem1}
    \dot{\mb x} = \mb{f}(\mb x) + \mb{g}(\mb x) \mb{k} ( t, \mb x) \triangleq \mb f_{\rm cl}(t, \mb x),
\end{equation}
has a unique solution: ${  \mb {x}(t) \!=\! \mb x_0 \!+\! \int^{t}_{t_0} \mb f_{\rm cl}( \tau, \mb x(\tau)) d\tau ,~t \!>\! t_0.}$ for any initial condition ${\mb x(t_0) \!=\! \mb x_0 \!\in\! \mathcal{X}}$. In this paper, we assume that the system is forward complete.

We characterize safety through the notion of forward invariance. Let the time-varying safe set ${\C(t) }$ be the 0-superlevel set of a continuously differentiable function ${h\!:\! \mathbb{R}_{\geq 0} \!\times\! \mathcal{X} \!\to\! \mathbb{R}}$:
\begin{equation*}
    \C(t) \!\triangleq\! \left\{ \mb x \!\in\! \mathcal{X} \!\subseteq\! \mathbb{R}^n \!:\! h( t, \mb x) \!\geq\! 0 \right\}.
\end{equation*}
This set is said to be \textit{forward invariant} if, for any initial condition ${\mb x(t_0) \!\in\! \C(t_0)}$, the solution $ \mb {x}(t) $ stays in ${\C(t), ~\forall t \geq\! t_0}$. The closed-loop system \eqref{eq:clsystem1} is {\em safe} with respect to the \textit{safe set} $\C(t)$ if $\C(t)$ is forward invariant. To design controllers that guarantee such forward invariance properties, CBFs \cite{ames2017control} provide a systematic framework.
\begin{definition}[Control Barrier Function, \cite{ames2017control}]
\label{def:cbf}
Let ${\C(t) \!\subseteq\! \mathcal{X} }$ be the 0-superlevel set of a continuously differentiable function ${h\!:\! \mathbb{R}_{\geq 0} \!\times\! \mathcal{X} \!\to\! \mathbb{R}}$ with $0$ as a regular value. The function $h$ is a \textit{control barrier function} for \eqref{eq:affine-dynamics} on $\C(t)$ if there exists an extended class-$\mathcal{K}_{\infty}$ function\footnote{A continuous function ${\alpha \!:\! [0, a ) \!\to\! \mathbb{R}_{\geq 0}}$, where ${a \!>\! 0}$, belongs to class-${\mathcal{K}}$ (${\alpha \!\in\! \mathcal{K}}$) if it is strictly monotonically increasing and ${\alpha(0) \!=\! 0}$. A continuous function ${\alpha \!:\! \mathbb{R} \!\to\! \mathbb{R}}$ belongs to the set of extended class-$\mathcal{K}_\infty$ functions (${\alpha \!\in\! \mathcal{K}_{\infty}^{e}}$) if it is strictly monotonically increasing, ${\alpha(0) \!=\! 0}$, ${\lim_{r \!\to\! \infty} \alpha(r) \!=\! \infty}$ and ${\lim_{r \!\to\! -\infty} \alpha(r) \!=\! -\infty}$. A continuous function ${\beta \!:\! [0, a ) \!\times\! \mathbb{R}_{\geq 0} \!\to\! \mathbb{R}_{\geq 0}}$ belongs to class-${\mathcal{K L}}$ (${\beta \!\in\! \mathcal{KL}}$), if for every ${s \!\in\! \mathbb{R}_{\geq 0}}$, ${\beta(\cdot, s)}$ is a class-$\mathcal{K}$ function and for every ${r \!\in\! [0, a )}$, ${\beta(r, \cdot) }$ is decreasing and ${\lim_{s \!\to\! \infty} \beta(r, s) \!=\! 0.}$} ${\alpha \!\in\! \mathcal{K}_{\infty}^{e}}$ such that ${\forall \mb x \!\in\! \C(t)}$:
\begin{equation*}
   \sup_{\mb u \in \mathcal{U}} \big [ L_{\mb f} h(t, \mb x) + L_{\mb g} h(t, \mb x) \mb u + \partial_t h(t, \mb x) \big ] 
   \!\geq\! -\alpha (h(t, \mb x)),
\end{equation*}
where ${L_{\mb f} h \!=\! \frac{\partial h(t, \mb x)}{\partial x}\mb{f}(\mb{x})}$, ${L_{\mb g} h \!=\! \frac{\partial h(t, \mb x)}{\partial x}\mb{g}(\mb{x})}$, ${\partial_t h \!=\! \frac{\partial h(t, \mb x)}{\partial t}}$.
\end{definition}
We can utilize CBFs to design a {\em safety filter}.
In particular, consider a baseline \textit{nominal controller} ${\mathbf{k_d} \!:\! \mathbb{R}_{\geq 0} \!\times\! \mathcal{X} \!\to\! \mathcal{U}}$ that is locally Lipschitz continuous but potentially unsafe, along with a CBF ${h}$ with a corresponding ${\alpha \!\in\! \mathcal{K}_{\infty}^{e}}$ for system \eqref{eq:affine-dynamics}. We enforce safety constraints using the CBF-Quadratic Program (CBF-QP) \cite{ames2017control}:   
\begin{align*}
\begin{array}{l}
{\mathbf{k^*}(t, \mb x)= \ }
\displaystyle  \argmin_{{\mb u}  \in \mathcal{U}} \ \ \ {\|\mb u - \mb{k_d}( t, \mb x) \|^2}  \\ [1mm]
~~~~~~~~~~~~~~\textrm{s.t.} ~~~~~~~~~\dot{h}(t, \mb x, \mb u)  \geq - \alpha (h( t, \mb x)).
\end{array}
\end{align*}

\section{Robust Control Barrier Functions with Online Parameter Adaptation} 
\label{sec:theory}
We first define the concept of a time-varying robust CBF (R-CBF) to account for state uncertainty. Then, we introduce an optimization method for online adaptation of the R-CBF tuning parameters.

\subsection{Robust Control Barrier Functions}
In practice, feedback control systems rely on an estimate, ${\hat{\mb x}  \! \in \! \hat{\mathcal{X}} \!\subseteq\! \R^n}$ of the true state ${x \!\in\! \mathcal{X} \!\subseteq\! \R^n}$ for the calculation of the control input. This estimate is typically obtained through an observer or filter, and is subject to bounded estimation uncertainties satisfying ${\| {\mb x}  \!-\!  \hat{\mb x} \| \!\leq\! \delta({\mb x})}$, for some ${\delta \!:\! \mathcal{X} \!\to\! \mathbb{R}_{\geq 0}}$, where ${ \mb e \!\triangleq\!{\mb x}  \!-\!  \hat{\mb x}  }$ is the state estimation error.  
\begin{assumption}
\label{as:regular}
Every non-positive real number is a regular value of $h$, ${\forall t \!\geq\! 0}$, i.e., ${\nabla_{\mb x} h(t, \!{\mb x}) \!\neq\! 0}$ when ${h(t, \!{\mb x}) \!\leq\! 0}$, ${\forall t \!\geq\! 0}$.
\end{assumption}
\begin{assumption}
\label{as:compact}
Every non-positive super-level set of $h$ is compact and uniformly bounded, i.e., $\C_{\beta}(t) \!\triangleq\! \big\{{\mb x} \!\in\! {\mathcal{X}} \!:\! h(t, {\mb x}) \!\geq\! - \beta \big\}$ is compact for all $\beta \!\in\! \mathbb{R}_{\geq 0}, ~t \!\geq\! 0$, and $\C_{\beta}(t) \!\subseteq\! S$ for all $t \!\geq\! 0$, where $S$ is some compact set.
\end{assumption}
\begin{definition}
\label{def:stable_set}
A time-varying compact set ${\mathcal{S}(t) \!\subseteq\! \R^n}$ is asymptotically stable if there exists ${\rho \!\in\! \mathcal{K}\mathcal{L}}$ such that
\begin{equation*}
\|\mb x (t) \|_{\mathcal{S}(t)} \leq \rho(\|\mb x (t_0) \|_{\mathcal{S}(t_0)}, t \!-\! t_0), \ \forall t \geq t_0,
\end{equation*}
where ${\|\mb x(t) \|_{\mathcal{S}(t)} \!\triangleq\! \inf_{y \in \mathcal{S}(t)} \| \mb x(t) - y\|}$ is the distance between the point $\mb x$ and the set $\mathcal{S}$ at time $t$.
\end{definition}

The notion of a robust control barrier function \cite{rahal_cdc_2025} has been proposed to synthesize safety-critical controllers under state estimation uncertainty.
\begin{definition}[Robust Control Barrier Function, \cite{rahal_cdc_2025}]
\label{def:rcbf}
A continuously differentiable function ${h\!:\! \mathbb{R}_{\geq 0} \!\times\! {\mathcal{X}} \!\to\! \mathbb{R}}$, is a \textit{robust control barrier function} parameterized by state-dependent functions ${\gamma_{1}, \gamma_{2} \!:\! \hat{\mathcal{X}} \!\to\! \mathbb{R}_{> 0}}$ for system \eqref{eq:affine-dynamics} with respect to the 0-superlevel set of $h$, ${\C(t) \!\subseteq\! \mathcal{X} }$, if there exists ${\alpha \!\in\! \mathcal{K}_{\infty}^{e}}$ such that:
\begin{align}
\begin{split}
\label{eq:rob_cbf}
&\sup_{u \in \mathcal{U}} \big [ L_{\mb f} h(t, \hat{\mb x}) \!+\! L_{\mb g} h(t, \hat{\mb x}) \mb u \!+\! \partial_t h(t, \hat{\mb x}) \!+\! \alpha (h(t, \hat{\mb x})) \big ]  \\
&~~~~~\geq \gamma_{1}(\hat{\mb x}) \left\|L_{\mb g} h(t, \hat{\mb x})\right\|_{2} + \gamma_{2}^2 (\hat{\mb x}) \left\|L_{\mb g} h(t, \hat{\mb x})\right\|_{2}^{2},
\end{split}
\end{align}
for all ${\hat{\mb x}  \! \in \! \hat{\mathcal{X}}}$, where ${ \hat{\mb x} \!=\! {\mb x} \!-\! \mb e}$.
\end{definition}

Compared to the robust CBF formulation in \cite{rahal_cdc_2025}, which introduces a general robustness function, our R-CBF definition uses explicit state-dependent robustness terms $\gamma_1$ and $\gamma_2$. This structured choice is a specialization of the framework in \cite{rahal_cdc_2025} and is tailored to simplify the online adaptation procedure. Given an R-CBF ${h}$ and a corresponding ${\alpha}$, the pointwise set of all control values that satisfy \eqref{eq:rob_cbf} is given by
\begin{align} 
   \label{eq:rcbf-contr} 
    & K_\text{R-CBF}(t, \hat{\mb x}) \triangleq \big\{ \mb u \in \mathcal{U}  \big |  \dot{h}(t, \hat{\mb x}, \mb u) + \alpha (h(t, \hat{\mb x})) \\
   & \ \ \geq  \gamma_{1}(\hat{\mb x}) \left\|L_{\mb g} h(t, \hat{\mb x})\right\|_{2} + \gamma_{2}^2 (\hat{\mb x}) \left\|L_{\mb g} h(t, \hat{\mb x})\right\|_{2}^{2} \big\}.
   \nonumber
\end{align}

We can establish formal robust safety guarantees based on Definition~\ref{def:rcbf} with the help of the following theorem:
\begin{theorem}
\label{the:rcbf}   
Let ${h\!:\! \mathbb{R}_{\geq 0} \!\times\! {\mathcal{X}} \!\to\! \mathbb{R}}$ be an R-CBF for \eqref{eq:affine-dynamics}, on its $0$-superlevel set ${\C(t)}$ with an ${\alpha \!\in\! \mathcal{K}_{\infty}^{e}}$. 
Assume that ${\| {\mb x}  \!-\!  \hat{\mb x} \| \!\leq\! \delta(\mb x)}$, for all ${\mb x \!\in\! \mathcal{X}}$, for some ${\delta \!:\! \mathcal{X} \!\to\! \mathbb{R}_{\geq 0}}$, and that Assumption~\ref{as:regular} and \ref{as:compact} hold. 
Then, any locally Lipschitz feedback controller ${\mb k(t, \hat{\mb x})}$ with Lipschitz constant $\mathcal{L}_{\mb k}$, satisfying ${\mb k( t, \hat{\mb x}) \!\in\! K_\text{R-CBF}(t, \hat{\mb x})}$ for all ${(t, \hat{\mb x}) \!\in\! \mathbb{R}_{\geq 0} \!\times\! \hat{\mathcal{X}}  }$, renders the original set $\C(t)$ safe and asymptotically stable for \eqref{eq:affine-dynamics} if:
\begin{equation}
    \delta({\mb x}) \leq  \frac{\gamma_{1}({\mb x})}{\mathcal{L}_{\mb k}}, \ \forall \mb x \!\in\! \C(t).
    \label{eq:rcbf_ori}
\end{equation}
Otherwise, for any compact superlevel
set $\C_{\beta}(t)$, defined in Assumption~\ref{as:compact}, the same controller ensures safety and asymptotic stability of the inflated set given by
\begin{equation*}
   \!\!\!\Bar{\C}(t) \!\triangleq\!\! \left\{  \!\mb x \!\in\! \mathcal{X} \!:\! h( t, \!\mb x) \!\geq\! \alpha^{-1} \! \left ( \!-\Bar{\phi}^2  \left ( t, \gamma_1(\mb x), \!\gamma_2(\mb x), \delta(\mb x) \right ) \right )  \! \right\},
\end{equation*}
where 
\begin{equation*}
    \Bar{\phi}(t, \gamma_1(\mb x), \gamma_2(\mb x), \delta(\mb x)) \!=\! \frac{ \sigma_\beta(t, \delta(\mb x))  - \gamma_{1}({\mb x}) }{2 \gamma_{2}({\mb x}) },
\end{equation*}
\begin{equation}
\label{eq:sigma_beta}
\sigma_\beta (t, \delta(\mb x)) \!\triangleq\! \sup_{ \! \hat{\mb x} \in \C_\beta(t)} \bigg [ \sup_{\Vert \mb e \Vert \leq \delta(\mb x)} \!\big [ \Vert \mb k(t, \hat{\mb x}) \!- \mb k(t, \hat{\mb x} \!+ \mb e) \Vert \big ] \!\bigg ].
\end{equation}
\end{theorem}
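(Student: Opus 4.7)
The plan is to reduce the statement to a standard CBF-style comparison argument carried out at the \emph{true} state $\mb x$, by exploiting the fact that the R-CBF inequality \eqref{eq:rob_cbf} is required to hold at every $\hat{\mb x} \in \hat{\mathcal{X}}$. The key observation is that we may specialize $\mb k(t,\hat{\mb x}) \in K_\text{R-CBF}(t,\hat{\mb x})$ to the error-free value $\hat{\mb x}=\mb x$, which yields
\begin{equation*}
L_{\mb f} h(t,\mb x) + L_{\mb g} h(t,\mb x)\,\mb k(t,\mb x) + \partial_t h(t,\mb x) + \alpha(h(t,\mb x)) \geq \gamma_1(\mb x)\|L_{\mb g} h(t,\mb x)\| + \gamma_2^2(\mb x)\|L_{\mb g} h(t,\mb x)\|^2.
\end{equation*}
The derivative along closed-loop trajectories, under the actually applied input $\mb k(t,\hat{\mb x})$, differs from this ``ideal'' derivative only by the single term $L_{\mb g} h(t,\mb x)\bigl[\mb k(t,\hat{\mb x})-\mb k(t,\mb x)\bigr]$, which I will absorb into the $\gamma_1,\gamma_2$ robustness margins.

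Concretely, I would (i) bound the mismatch by Cauchy--Schwarz together with \eqref{eq:sigma_beta} as $|L_{\mb g} h(t,\mb x)\bigl[\mb k(t,\hat{\mb x})-\mb k(t,\mb x)\bigr]| \leq \|L_{\mb g} h(t,\mb x)\|\,\sigma_\beta(t,\delta(\mb x))$, which is valid whenever $\hat{\mb x}\in\C_\beta(t)$; (ii) subtract this from the ideal inequality to obtain the pointwise bound
\begin{equation*}
\dot h(t,\mb x) + \alpha(h(t,\mb x)) \geq \|L_{\mb g} h(t,\mb x)\|\bigl(\gamma_1(\mb x)-\sigma_\beta\bigr) + \gamma_2^2(\mb x)\,\|L_{\mb g} h(t,\mb x)\|^2;
\end{equation*}
and (iii) analyze the right-hand side as a quadratic in $a \triangleq \|L_{\mb g} h(t,\mb x)\| \geq 0$. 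In the ``matched'' regime where $\sigma_\beta \leq \gamma_1(\mb x)$---which the Lipschitz estimate $\sigma_\beta \leq \mathcal{L}_{\mb k}\delta(\mb x)$ guarantees under the hypothesis $\delta(\mb x)\leq\gamma_1(\mb x)/\mathcal{L}_{\mb k}$---the quadratic is non-negative, so $\dot h \geq -\alpha(h)$, and forward invariance plus asymptotic stability of $\C(t)$ follow from the comparison lemma. Otherwise, completing the square in $a$ yields the pointwise minimum $-(\sigma_\beta-\gamma_1)^2/(4\gamma_2^2) = -\bar\phi^2$, so $\dot h \geq -\alpha(h) - \bar\phi^2$, which delivers invariance and asymptotic attractivity of $\Bar{\C}(t)$ by an analogous comparison argument.

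The main obstacle I anticipate is the circular dependence between $\sigma_\beta$ and the region in which the estimate is assumed to lie: step~(i) is only justified while $\hat{\mb x}\in\C_\beta(t)$, yet establishing this requires exactly the invariance conclusion that it is being used to prove. I would close this loop by choosing $\beta$ large enough that $\Bar{\C}(t)\subseteq\C_\beta(t)$ (possible by Assumption~\ref{as:compact}), and then bootstrapping: as long as $\hat{\mb x}\in\C_\beta(t)$ the inequality $\dot h \geq -\alpha(h) - \bar\phi^2$ holds, which, combined with the $\classKL$-estimate extracted from the comparison lemma, prevents escape from $\C_\beta(t)$ and thereby validates the bound retroactively. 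Secondary technicalities are (a) upgrading mere invariance to \emph{asymptotic} stability of $\Bar{\C}(t)$, which follows because $\dot h > -\alpha(h)$ strictly outside $\Bar{\C}(t)$, driving $h$ into the set at a $\classKL$-rate determined by $\alpha$, and (b) invoking Assumption~\ref{as:regular} so that $\nabla_{\mb x}h\neq 0$ on the relevant boundary, ensuring the robustness terms are actually engaged where the safety constraint binds.
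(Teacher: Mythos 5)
The paper does not give a self-contained proof of Theorem~\ref{the:rcbf}; it defers entirely to Theorem~2 and Corollary~3 of \cite{rahal_cdc_2025}, remarking only that time dependence and state-dependent $\gamma_1,\gamma_2$ must be carried through. Your reconstruction identifies the correct mechanism, and it is the mechanism underlying the cited result: because the R-CBF inequality \eqref{eq:rob_cbf} and the membership $\mb k(t,\hat{\mb x})\in K_\text{R-CBF}(t,\hat{\mb x})$ are imposed at \emph{every} point of $\hat{\mathcal{X}}$, they may in particular be evaluated at the true state, so the estimation error enters $\dot h$ only through the controller mismatch $L_{\mb g}h(t,\mb x)\bigl[\mb k(t,\hat{\mb x})-\mb k(t,\mb x)\bigr]$. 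That mismatch is dominated by the linear margin $\gamma_1(\mb x)\|L_{\mb g}h(t,\mb x)\|$ when $\mathcal{L}_{\mb k}\delta(\mb x)\le\gamma_1(\mb x)$, and otherwise is absorbed by completing the square against $\gamma_2^2(\mb x)\|L_{\mb g}h(t,\mb x)\|^2$, whose worst case is exactly $-\Bar{\phi}^2$; the comparison lemma then yields invariance and attractivity of $\C(t)$ or of $\Bar{\C}(t)$, respectively. Your recognition that invoking $\sigma_\beta$ presupposes $\hat{\mb x}\in\C_\beta(t)$, and your proposal to close that loop by choosing $\beta$ via Assumption~\ref{as:compact} and bootstrapping, is the right repair and is precisely the role the compactness assumption plays.

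Three technical points deserve explicit care if you write this out in full. First, setting $\hat{\mb x}=\mb x$ in \eqref{eq:rcbf-contr} requires $\mathcal{X}\subseteq\hat{\mathcal{X}}$ (or at least $\C_\beta(t)\subseteq\hat{\mathcal{X}}$); the paper leaves the relation between $\mathcal{X}$ and $\hat{\mathcal{X}}$ implicit, so you should state it as a hypothesis. Second, your bootstrap must confine $\hat{\mb x}$, not $\mb x$: the outer supremum in \eqref{eq:sigma_beta} ranges over $\hat{\mb x}\in\C_\beta(t)$, whereas the comparison argument controls $h(t,\mb x(t))$; since $\hat{\mb x}=\mb x-\mb e$ with $\|\mb e\|\le\delta(\mb x)$, you need continuity of $h$ together with Assumption~\ref{as:compact} to choose $\beta$ large enough that $\mb x\in\Bar{\C}(t)$ forces $\hat{\mb x}\in\C_\beta(t)$. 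Third, asymptotic stability of $\C(t)$ requires attractivity from initial conditions with $h<0$, where hypothesis \eqref{eq:rcbf_ori} is only assumed on $\C(t)$; strictly speaking one needs the bound to extend to the relevant sublevel region (this imprecision is inherited from the theorem statement, not introduced by your argument), and converting the resulting $h$-estimate into the set-distance $\classKL$ bound of Definition~\ref{def:stable_set} is where Assumptions~\ref{as:regular} and~\ref{as:compact} are genuinely consumed.
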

\begin{proof}
\label{proof:rcbf}
The proof follows from Theorem~2 and Corollary~3 in \cite{rahal_cdc_2025}, except for the explicit time dependence of $h$, and state-dependent parameters ${\gamma_{1}, \gamma_{2}}$. The explicit time dependence of $h$ is handled by our assumption that the superlevel sets of $h$ are uniformly bounded in time.
\end{proof}
\begin{remark}
\label{re:tunable_rcbf} 
Definition~\ref{def:rcbf} and Theorem~\ref{the:rcbf} extend the R-CBF framework from \cite{rahal_cdc_2025} in two key ways: it employs state-dependent parameters ${\gamma_{1}(\hat{\mb x}), \gamma_{2}(\hat{\mb x})}$ rather than constant parameters ${\gamma_{1}, \gamma_{2} \!\in\! \mathbb{R}_{> 0}}$, and it characterizes robust safety for time-varying CBFs. Thus, this definition generalizes \textit{tunable} R-CBFs \cite{alan2021safe}, and Corollary~2 in \cite{rahal_cdc_2025} becomes a special case when the state-dependent parameters take the form ${\gamma_{1}(\hat{\mb x}) \!=\! \frac{\gamma_{1}}{\varepsilon_{1}(h(t, \hat{\mb x}))}}$ and ${\gamma_{2}^2(\hat{\mb x}) \!=\! \frac{\gamma_{2}^2}{\varepsilon_{2}(h(t, \hat{\mb x}))}}$, where ${\varepsilon_{1}, \varepsilon_{2} \!:\! \mathbb{R} \!\to\! \mathbb{R}_{> 0}}$ are continuous functions that satisfy ${\varepsilon_{i}(r) \!=\! 1}$ when ${r \!\leq\! 0}$, and ${\frac{d \varepsilon_{i}}{d r} \! > \! 0}$ when ${r \!>\! 0}$, for ${i \!=\! 1,2}$.
\end{remark}

\subsection{Online Parameter Adaptation}
In this subsection, an online adaptation method for the robustifying parameters in the R-CBF inequality is presented. We omit the dependence on $\mb x$ from ${\gamma_1, \gamma_2}$ and ${\delta}$ for clarity in our notation.

Recall that a controller satisfying \eqref{eq:rcbf-contr} guarantees that the following set will be forward invariant and asymptotically stable:
\begin{equation*}
    \tilde{\C}(t)  \!\triangleq\! \left\{ \mb x \!\in\! \mathcal{X} : h(t, \mb x) \geq \alpha^{-1} \! \left ( \!-\phi^2  \left ( t, \gamma_1, \!\gamma_2, \delta \right ) \right ) \right \},
\end{equation*}
where ${\phi \!:\! \R_{\geq0} \!\times\! \R_{>0} \!\times\! \R_{>0} \!\times\! \R_{\geq0} \!\to\! \R_{\geq0}}$ is a measure of the set inflation, given by
\begin{equation}
    \phi(t, \gamma_1, \gamma_2, \delta) = \frac{\max\{\sigma_\beta(t, \delta) - \gamma_1, 0\}}{2 \gamma_2}. \label{eq:inflation-func}
\end{equation}
Note that the $\max$ function in \eqref{eq:inflation-func} combines the two cases discussed in Theorem~\ref{the:rcbf}. The function $\sigma_\beta$, which is defined in \eqref{eq:sigma_beta}, with some superlevel set of $h$, $\C_\beta$, upper bounds the maximum possible change in the feedback input for a given bound on the state estimation error. We also note that this function $\sigma_\beta$ will be a function of $\gamma_1, \gamma_2$, since the closed-loop controller $\mb k$ in \eqref{eq:sigma_beta} will be a function of $\gamma_1, \gamma_2$.

Equation~\eqref{eq:inflation-func} gives an explicit expression about the safe set inflation for an uncertainty bound $\delta$ and the robustifying parameters $\gamma_1$ and $\gamma_2$. The uncertainty bound $\delta$ information, albeit time-varying and possibly conservative, is usually available, for example, provided by VIO. In the following, we then aim to minimize this set inflation by choosing proper parameters $\gamma_1$ and $\gamma_2$.

To select the parameters online, we first introduce a local version of $\sigma_\beta (t, \delta(\mb x))$ in \eqref{eq:sigma_beta} around the current state estimate $\hat{\mb x}$, since CBF conditions are evaluated pointwise at the current state estimate:
\begin{equation}
    \tilde{\sigma} ( t, \hat{\mb x}, \delta, \gamma_1, \gamma_2) = \sup_{\Vert \mb e \Vert \leq \delta} \left( \Vert \mb k(t, \hat{\mb x}) - \mb k (t, \hat{\mb x} + \mb e) \Vert \right),
    \label{eq:local_sigma}
\end{equation}
which upper bounds the variation of the feedback input due to state estimation error locally at the given state estimate. It is still not possible to compute \eqref{eq:local_sigma} explicitly, as it requires evaluating the supremum around the current estimate. We now approximate this by first sampling ${N \!\in\! \mathbb{N}}$ points, $\{\tilde{\mb x}_1, \dots, \tilde{\mb x}_N \}$, from the closed ball of radius ${\delta}$ centered at ${\hat{\mb x}}$ in $\R^n$, $B_\delta(\hat{\mb x})$, and then evaluating:
\begin{equation}
    \hat{\sigma} (t, \hat{\mb x}, \delta, \gamma_1, \gamma_2) = \max_{\mb x \in \{\tilde{\mb x}_1, \dots, \tilde{\mb x}_N \} } \Vert \mb k(t, {\mb x}) - \mb k (t, \hat{\mb x}) \Vert. \label{eq:sigma_hat}
\end{equation}
Now, at any given $\hat{\mb x}$ and $\delta$, we substitute $\hat{\sigma} (t, \hat{\mb x}, \delta, \gamma_1, \gamma_2)$ as an approximation of $\sigma_\beta (t, \delta(\mb x))$ in \eqref{eq:inflation-func}, and select the optimal constants ${\gamma_1, \gamma_2}$ by solving:
\begin{align}
&\Bar{\gamma}_1(t, \hat{\mb x}, \delta), \Bar{\gamma}_2(t, \hat{\mb x}, \delta) \nonumber  \\  &= \argmin_{\gamma_1, \gamma_2 \in \R_{> 0}}  \left( \frac{\max\{\hat{\sigma} (t, \hat{\mb x}, \delta, \gamma_1, \gamma_2) - \gamma_1, 0\}}{2 \gamma_2} \right). \label{eq:zoopt}
\end{align}
Since it is complex to explicitly compute $\frac{d \hat{\sigma}}{\gamma_1}, \frac{d \hat{\sigma}}{\gamma_2}$, we resort to using derivative-free optimization methods. We remark that $\hat{\sigma}$ in \eqref{eq:sigma_hat} is a practical, sampling-based approximation of $\tilde{\sigma}$. 

\begin{remark}
\label{re:ada_k} 
To obtain a robust safe control $\mb k(t, {\mb x})$, we may replace the nominal CBF condition in the CBF-QP with an R-CBF condition to obtain an R-CBF-QP. When optimizing the parameters $\gamma_1$ and $\gamma_2$ via \eqref{eq:zoopt}, this R-CBF-QP needs to be solved for each sampling point $ \tilde{\mb x}_i$. When  ${\mathcal{U} \!=\! \R^m}$, closed-form solutions for a QP with a single constraint can facilitate fast computation.

If we utilize multiple CBFs, resulting from multiple independent safety specifications, directly in the same QP, then we need to optimize two different robustness parameters for each CBF, which consequently increases the dimensionality of the parameter optimization problem. In contrast, by using a single unified CBF constraint in the R-CBF-QP, we only need to optimize a single set of parameters ${\Bar{\gamma}_1, \Bar{\gamma}_2}$, thereby simplifying the optimization process. This motivates the need for a general CBF unification method, as will be described in the next section. 
\end{remark}

\subsection{Unifying CBFs via Poisson Safety Functions}
\label{sec:Poisson}
A safety function ${h_0 \!:\! \mb y \!\mapsto\! h_0(\mb y) \!\in\! \mathbb{R}}$ is a continuous function that indicates the value of safety for a particular output coordinate ${\mb y \!\in\! \R^p}$. Crucially, a coordinate is considered safe when ${h_0(\mb y) \!\geq\! 0}$ and unsafe when ${ h_0(\mb y) \!<\! 0}$. A safety function $h_0$ differs from a CBF $h$ in that $h_0$ characterizes the desired output-space safe set without considering its relationship with the underlying system dynamics. To generate $h_0$, we employ the Poisson safety function method introduced in \cite{bahati2025poisson, bena2025geometry}. 

First, we must construct a bounded domain $\Omega$ corresponding to the interior of some desired safe set. For the case of multiple safety functions $h_i$ (these can be CBFs) associated with 0-superlevel sets $\C_i$ for ${i \!\in\! \{1,\ldots, N_h\}}$, the new desired safe set can be defined as the intersection of the 0-superlevel sets of these functions: ${\C \!=\! \bigcap_{i =1}^{N_h} \C_i}$, since the safety task is to satisfy all constraints simultaneously. Equivalently, this intersection can be expressed via the $\min$ function \cite{glotfelter2017}:
\begin{equation*}
\Omega \triangleq \bigcap_{i =1}^{N_h}  \C_i = \Big\{ \mb y \in \R^p  : \min_{i \in \{1,\ldots, N_h\}} \left [ h_i( \mb y) \right ] \geq 0 \Big\}.
\end{equation*}

Given this domain $\Omega$, we construct a unified safety function $h_0$ on $\Omega$ by solving a Dirichlet problem (boundary value problem) for Poisson's equation \cite{bahati2025poisson}:
\begin{equation}
\label{eq: Poisson}
    \Bigg\{
    \begin{aligned}
        \Delta h_0(\mb y) &= \mathcal{F}(\mb y), \quad&\forall \mb y\in\Omega, \\
        h_0(\mb y) &= 0, \quad &\forall \mb y\in\partial\Omega,
    \end{aligned}
\end{equation}
where $\Delta = \frac{\partial}{\partial\mb y} \cdot \frac{\partial}{\partial\mb y}$ is the Laplacian operator, ${\mathcal{F} \!:\! \Omega \!\to\! \R_{<0}}$ is a forcing function that yields ${h_0(\mb y) \!\geq\! 0}$ in safe regions, and the boundary $\partial\Omega$ corresponds to the desired zero level set. Similarly, we can derive $h_0$ in unsafe regions $\hat{\Omega}$ by modifying the forcing function such that ${\mathcal{F}: \hat{\Omega} \rightarrow \R_{>0}}$. 

The resultant Poisson safety function $h_0$ can be used directly as a CBF for systems whose inputs have relative degree one\footnote{This is true for systems characterized by single integrator dynamics.} with respect to $h_0$. More generally, $h_0$ can be dynamically extended to generate a CBF.

\section{ An Application to Safe Tracking for Tracked Robots}
\label{sec:imple}

\subsection{Safety Filter Construction}

This section experimentally demonstrates the effectiveness of the proposed safe controller on a tracked mobile robot, which is tasked to follow a reference trajectory while avoiding collisions. 

Suppose that the robot is governed by unicycle dynamics:
\begin{equation}
    \begin{bmatrix}
        \dot{x} \\
        \dot{y} \\
        \dot{\theta}
    \end{bmatrix} = \begin{bmatrix}
        \cos{\theta} & 0\\
        \sin{\theta} & 0\\
        0 & 1
    \end{bmatrix}  \begin{bmatrix}
        v \\
        \omega
    \end{bmatrix} ,
\label{eq:unicycle}
\end{equation}
where ${\mb x \!=\! [x~y~\theta]^\top \!\in\! \mathcal{X} \!=\! \mathbb{R}^2 \!\times\! \mathbb{S}^1}$ represents the planar positions and heading angle with respect to an inertial frame, and ${ \mb u \!=\! [v~\omega]^\top \!\in\! \mathbb{R}^2}$ denotes the linear and angular velocities as control inputs.

We define a reference-tracking control law:
\begin{equation}
\label{eq:baseline}
    \mb {k_d}(t, \hat{\mb x}) = 
    \begin{bmatrix}
        K_v \!~ \| \mb{r_e}(t, \hat{\mb x}) \!~ \| \\
        K_{\omega} \!\!~ \left(\theta_d(t, \hat{\mb x}) - \theta \right)
    \end{bmatrix} ,
\end{equation}
where ${K_v, K_{\omega} \!\geq\! 0}$ are controller gains for the linear and angular velocities, ${\theta_d \!:\! \mathbb{R}_{\geq 0} \!\times\! \hat{\mathcal{X}} \!\to\! \mathbb{S}^1}$ is the desired robot heading, and ${\mb{r_e}(t, \mb x) \!\triangleq\! [x_d(t)~ y_d(t)]^\top \!-\! [x~ y]^\top}$ is the position-error vector with respect to the desired position ${(x_d, y_d)}$ in $\mathbb{R}^2$. The desired heading angle is given by
\begin{equation}
\label{eq:ref_theta}
\theta_d(t, \mb x) = \atan \left(y_d(t) - y,\!~ x_d(t) - x \right).
\end{equation}

For our particular safety-critical scenario (as pictured in Fig.~\ref{fig: env}), we consider a reference trajectory parameterized by:
\begin{equation}
\label{eq:ref_xy}
x_d(t) = v_{\text{ref}} t, \qquad
y_d(t) = a_d \sin(c_d x_d(t) + \varphi_d),
\end{equation}
with $a_d$, $c_d$, and $\varphi_d$ being the amplitude, spatial frequency, and phase shift, respectively, and $v_{\text{ref}}$ is the desired linear velocity in the $x$-axis. 

Next, we construct a safe set such that the mobile robot must remain within a safe rectangular region defined by ${ x_\mathrm{min} \leq x \!\leq\! x_\mathrm{max}}$ and ${y_\mathrm{min} \!\leq\! y \leq y_\mathrm{max}}$ while avoiding two circular obstacles centered at ${(x_{o,i}, y_{o,i}) \!\in\! \mathbb{R}^2}$ with radius $R_i$, ${i \!\in\! \{1,2\}}$. As such, the desired unified safe set is the intersection of the 0-superlevel sets of the following functions:
\begin{align*}
&h_i(x,y) =  \left \| [x, y]^\top - [x_{o,i}, y_{o,i}]^\top \right \| - R_i, \quad i \!\in\! \{1,2\}, \\
&h_3(x) = x - x_\mathrm{min}, \quad h_4(x) = x_\mathrm{max} - x, \\
&h_5(y) = y - y_\mathrm{min}, \quad h_6(y) = y_\mathrm{max} - y.
\end{align*}

\begin{figure}
    \centering    
    \includegraphics[width=\linewidth]{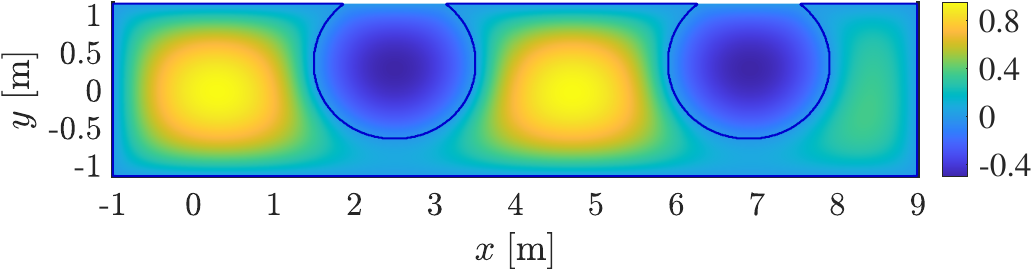}
    \caption{Poisson safety function. The function $h_0$ describes the spatial safety specification, and it can be used to generate a CBF, as in \eqref{eq:modified_CBF}.}
    \label{fig: poisson}
\end{figure} 

We generate $h_0$, depicted in \mbox{Fig.\, \ref{fig: poisson}}, by numerically solving \eqref{eq: Poisson}. Because the safe set is static, this solution can be precomputed offline on a uniform discrete mesh. During real-time operation, we evaluate $h_0$ using bilinear interpolation, and we compute its spatial derivatives with various numerical gradient methods. 

Considering the dynamics in \eqref{eq:unicycle}, the Poisson safety function $h_0$ has relative degree one with respect to the input $v$ and relative degree two with respect to the input $\omega$. This dual relative degree property reduces the utility of employing $h_0$ directly as a CBF, since the resultant safety filter will only use $v$ to enforce forward invariance of the desired safe set, completely neglecting the robot's steering capabilities. In accordance with \cite{bahati2025control}, a modified CBF is proposed as
\begin{align} \label{eq:modified_CBF}
    h(t, \mb x) &= h_0(x,y) - \frac{1}{\mu}V(t, \mb x),\\
    V(t, \mb x) &= 1 - \cos(\theta - \theta_{s}(t,x,y)),
\end{align}
where ${\mu \!\in\! \R_{>0}}$ is a tuning parameter and ${V \!:\! \mathbb{R}_{\geq 0} \!\times\! \mathcal{X} \!\to\! \mathbb{R}_{\geq 0}}$ is a Lyapunov function that characterizes the convergence of $\theta$ to $\theta_{s}(t,x,y)$ under control (\ref{eq:baseline}). The function ${\theta_{s} \!:\! \mathbb{R}_{\geq 0} \!\times\! \mathbb{R}^2 \!\to\! \mathbb{S}^1}$ defines a heading which, when achieved, guarantees safety for the unicycle system. Systematically, $\theta_{s}(t,x,y)$ can be chosen as the direction of the safe output vector for an equivalent integrator system. This modified barrier function \eqref{eq:modified_CBF} has been shown to be effective in eliminating deadlocks when avoiding obstacles in the configuration space \cite[Example~6]{cohen2024safety} and \cite{bahati2025control}. 
Further details on the construction of $\theta_s$ are provided in Appendix~\ref{app:thetas}.
It is also worth noting that the $\lambda$-superlevel set of $h$ will be contained within the $\lambda$-superlevel set of $h_0$ for all $t \geq 0$, for any $\lambda \in \R$, thus preserving compactness of superlevel sets.

In an R-CBF-QP, the nominal controller $\mb{k_d}( t, \hat{\mb x})$ provided to the QP is directly taken from the reference-tracking controller, $\mathbf{k_d}$. However, near the safety boundary, this nominal input often deviated from the safe direction indicated by the single-integrator velocity $\hat{v}_s$. This misalignment can lead to jittery behavior when approaching the boundary. To address this issue, we introduce a practical modification for the baseline control law $\mb {k_d}$, given in \eqref{eq:baseline}, as
\begin{equation}
\label{eq:safe_ref}
\mb{\Bar{k}_d}( t, \hat{\mb x}) =
\begin{bmatrix}
|\hat{v}_s(t, \hat{\mb x})| \\
K_{\omega} \!~ (\theta_s(t, \hat{\mb x}) -\theta)
\end{bmatrix},
\end{equation}
which aims to maintain the desired safe linear velocity while aligning the heading angle towards $\theta_s$. Practically, this modification projects the potentially unsafe reference trajectory ${[x_d(t) ~ y_d(t)]^\top}$ onto the safe set, so that the R-CBF-QP only needs to apply small corrections near the boundary, thereby mitigating the jittery behavior.

Finally, robust safety can be ensured by solving the following R-CBF-QP:
\begin{align*}
\begin{array}{l}
{\mathbf{k^*}(t, \hat{\mb x})= \ }
\displaystyle  \argmin_{{\mb u}  \in \mathcal{U}} \ \ \ {\|\mb u - \mb{\Bar{k}_d}( t, \hat{\mb x}) \|^2}  \\ [4mm]
~~~~~\textrm{s.t.} ~~L_{\mb f} h(t, \hat{\mb x}) + L_{\mb g} h(t, \hat{\mb x}) \mb u + \partial_t h(t, \hat{\mb x}) + \alpha (h(t, \hat{\mb x})) \\
    ~~~~~~~~~~ \geq  \Bar{\gamma}_{1}(\hat{\mb x}) \left\|L_{\mb g} h(t, \hat{\mb x})\right\|_{2} + \Bar{\gamma}_{2}^2 (\hat{\mb x}) \left\|L_{\mb g} h(t, \hat{\mb x})\right\|_{2}^{2} .
\end{array}
\end{align*}

\subsection{Simulations}
\label{sec:sim}
In this subsection, we numerically demonstrate the effectiveness of our method by comparing it against several benchmarks in both nominal (no uncertainty) and robust (with state uncertainty) scenarios. To establish a baseline for comparison, we generate a safe trajectory by solving a robust constrained optimal control problem (R-COCP)-based safety filter problem that enforces the safety and input constraints under bounded state estimation errors. For details on this R-COCP, see Appendix~\ref{sec:RCOCP}.

\begin{figure}[t]
    \centering
\includegraphics[width=1.0\linewidth]{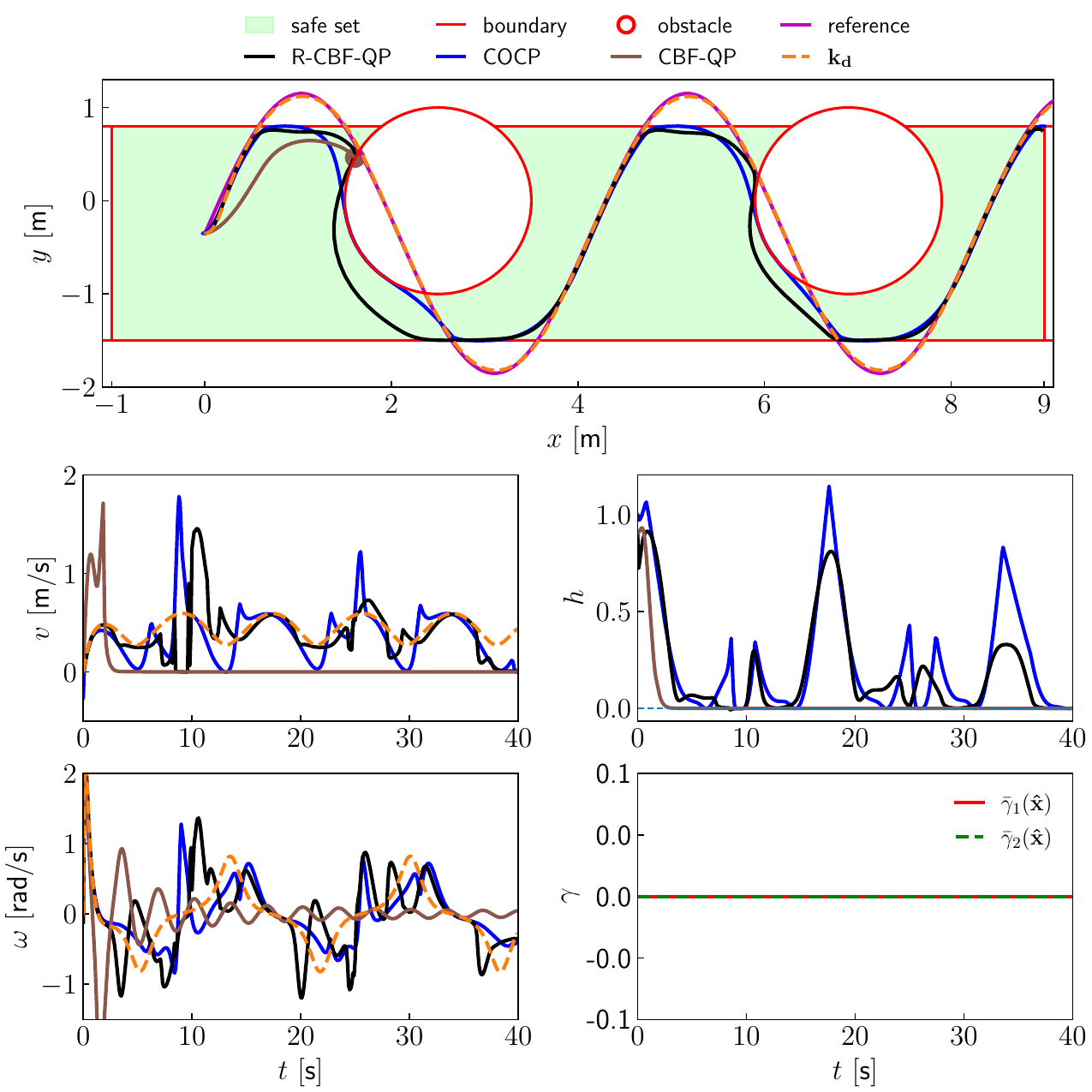}
    \caption{Nominal (no state uncertainty): COCP, vanilla CBF-QP, and our R-CBF-QP with online adaptation.  \textbf{Top:} Robot trajectories with the safe set and the reference. The brown dot ({\color[rgb]{0.549,0.337,0.294}$\bullet$}) marks a vanilla CBF-QP deadlock: the robot reaches a point where no admissible input both preserves safety and moves toward the reference. The baseline controller $\mathbf{k_d}$ follows the sinusoidal reference with negligible steady-state error. \textbf{Bottom:} Time histories of $v$, $\omega$, ${\gamma_1, \gamma_2}$, and $h$. Our method maintains ${h \!\geq\! 0}$ and closely tracks the reference, similar to COCP. The tuned parameters are obtained as ${\Bar{\gamma}_1(\hat{\mb x}) \!\approx\! 0}$, ${\Bar{\gamma}_2(\hat{\mb x}) \!\approx\! 0}$ with the proposed online tuning method, as ${\mb e \!\equiv\! 0}$.}
   \label{fig:ocp_cbf_nom}
\end{figure}

\begin{figure}[t]
    \centering
\includegraphics[width=1.0\linewidth]{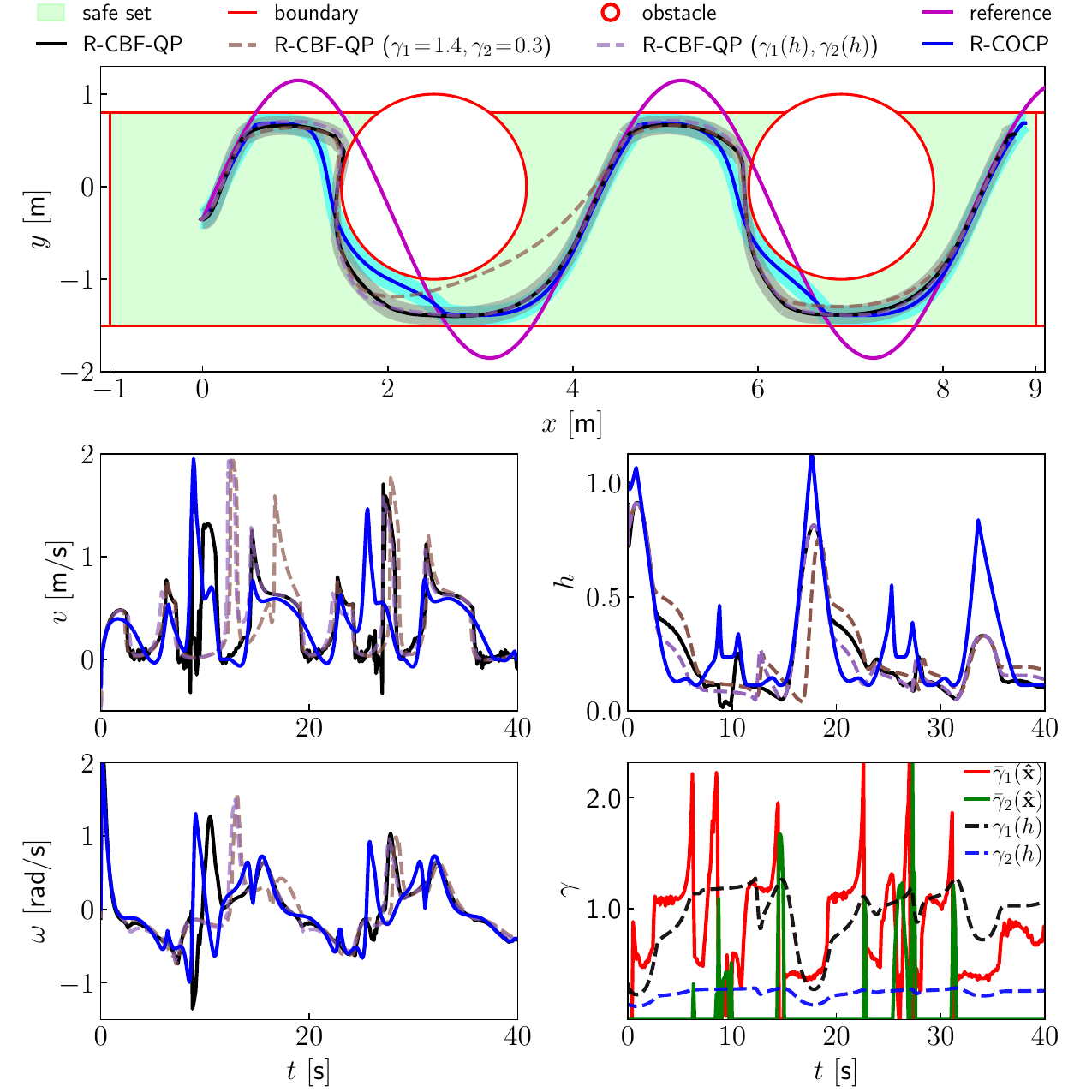}
    \caption{Robust (bounded state uncertainty): R-COCP, R-CBF-QP with fixed ${(\gamma_1 \!=\! 1.4, \gamma_2 \!=\! 0.3)}$, and tunable R-CBF-QP with ${\gamma_{1}(h), \gamma_{2}(h)}$ given in \eqref{eq:tun_par}, and our R-CBF-QP with online adaptation. \textbf{Top:} Trajectories under the measurement uncertainty. Shaded tubes show the uncertainty envelope from bounded measurement error. Our R-CBF-QP and tunable R-CBF-QP track the reference more closely and are comparable to R-COCP near obstacles, while R-CBF-QP with fixed parameters is more conservative. \textbf{Bottom:} Time histories of $v$, $\omega$, ${\gamma_1, \gamma_2}$, and $h$. All methods satisfy ${h \!\geq\! 0}$. Our online tuning method increases ${\Bar{\gamma}_1, \Bar{\gamma}_2}$ only when needed, such as near the boundary.}
   \label{fig:ocp_cbf_rob}
\end{figure}

In the nominal case, the benchmarks are CBF-QP and COCP. For the robust case, we compare the proposed R-CBF-QP with online robustness parameter adaptation ${\Bar{\gamma}_1(\hat{\mb{x}}), \Bar{\gamma}_2(\hat{\mb{x}})}$,  R-CBF-QP with fixed ${\gamma_1, \gamma_2}$ parameters, R-CBF-QP with tunable (CBF-dependent) ${\gamma_1(h), \gamma_2(h)}$ parameters, and R-COCP. We use two performance metrics $J_{opt}$ and $J_{t}$ (see Appendix~\ref{sec:RCOCP}) to compare alternative methods. \footnote{\label{footnote:code} The code and additional details of our simulations can be found at \\ \url{https://github.com/ersindas/robust-CBFs}} 

To simulate our safety-critical scenario with the real-world setup shown in Fig.~\ref{fig: env}, we first measure the boundaries of the safe rectangular region as ${x_\mathrm{min} \!=\! -1.5}$ m, ${x_\mathrm{max} \!=\! 0.8}$ m, ${y_\mathrm{min} \!=\! -2.0}$ m, ${y_\mathrm{max} \!=\! 2.0}$ m. And, two circular obstacles are centered at ${(x_{o,1}, y_{o,1}) \!=\! (2.5, 0.0)}$ m and ${(x_{o,2}, y_{o,2}) \!=\! (6.9, 0.0)}$ m with radius ${0.6}$ m. We inflate the circular obstacle boundaries by half the robot’s length, which is $0.4$ m; therefore, we have ${R_1 \!=\! R_2 \!=\! 1}$ m. Using these geometric parameters from the experimental setup, we define the individual CBFs ${h_i}$ for ${i \!\in\! \{1,\ldots, 6\}}$, construct the unified Poisson safety function $h_0$ as explained in Section~\ref{sec:Poisson}, and obtain $h_0$ shown in Fig.~\ref{fig: poisson}.

Next, we choose a sinusoidal reference trajectory tracking task via \eqref{eq:ref_xy} and \eqref{eq:ref_theta} such that it yields a safety constraint violation under the nominal controller $\mathbf{k_d}$ in \eqref{eq:baseline}. We set the parameters as ${v_{\text{ref}} \!=\! 0.25}$ m/s, ${a_d \!=\! 1.5}$ m, ${c_d \!=\! 1.52 }$ rad/s, and ${\varphi_d \!=\! 0}$. We use $\mathbf{k_d}$ and its safe projection $\mb{\Bar{k}_d}$ in \eqref{eq:safe_ref} with the controller gains ${K_v \!=\! 1, K_{\omega} \!=\! 2.5}$. The initial condition is ${\mb x_0 \!=\! [0~-0.35~0]^\top}$. 

For CBF-QP and R-CBF-QPs, the safety-critical control parameters are: ${\alpha \!=\! 3}$, and ${\mu \!=\! 3.3}$. We pick the fixed robustness parameters for comparison as ${\gamma_1 \!=\! 1.4}$, ${\gamma_2 \!=\! 0.3}$. To modulate the fixed robustness gains in the tunable R-CBF-QP, we utilize an exponential function following\cite{alan2021safe}: 
$\varepsilon_{i}(h(t, \hat{\mb x}))= {\rm e}^{\left( \eta_i \max \{h(t, \hat{\mb x}),~0 \} \right)}$,
where ${\eta_i \!>\! 0}$, and in our simulations, we set ${\eta_1 \!=\! 2, \eta_2 \!=\! 2}$; therefore, we have 
\begin{align}
\begin{split}
\label{eq:tun_par}
    \gamma_{1}(h) &= \frac{\gamma_{1}}{{\rm e}^{\left( \eta_1 \max \{h(t, \hat{\mb x}), ~0 \} \right)}},  \\ \gamma_{2}^2(h) &= \frac{\gamma_{2}^2}{{\rm e}^{\left( \eta_2 \max \{h(t, \hat{\mb x}),\!~0 \} \right)}}. 
\end{split}
\end{align}

In this study, we employ a grid search method to solve the online tuning problem \eqref{eq:zoopt}, as proof-of-concept, though other gradient-free optimizers may also be utilized. We sample ${N \!=\! 100}$ perturbed states, and perform a gradient-free grid search over ${\Bar{\gamma}_1, \Bar{\gamma}_2}$ pairs arranged on a ${400 \!\times\! 400}$ mesh covering the range ${[0.0001, 4]}$ for each axis. 

The control input ranges for the robot are ${v \!\in\! [-2, 2]}$ m/s and ${\omega \!\in\! [-2, 2]}$ rad/s. And, the control loop operates at 50 Hz. We then solved the QP-based safety filter problems with CVXPY using the OSQP solver. We solve the robust constrained optimal control problem numerically using CasADi \cite{2019casadi} software tool using IPOPT solver, warm-starting from a forward rollout obtained with constant linear velocity ${v \!=\! v_{\text{ref}}}$ and zero angular velocity. 

The simulation results for the measurement model ${\hat{\mb x} \!\equiv\! \mb x}$ without uncertainty (nominal) are presented in Fig.~\ref{fig:ocp_cbf_nom}. Next, we consider a scenario with a bounded state measurement error set (robust) given by ${\mathcal{E} \!\triangleq\![-0.05, 0.05] \!\times\! [-0.1, 0.1] \!\times\! \{0\}}$, i.e., ${{\mb x}\!\in\! \hat{\mb x}\!\oplus\!\mathcal{E}}$, where $\oplus$ denotes the Minkowski sum. The results for this scenario are shown in Fig.~\ref{fig:ocp_cbf_rob}. 

\begin{table}[t]
\caption{Performance metrics for simulation in the nominal (no state uncertainty) and robust (bounded uncertainty) scenarios.}
\centering
\begingroup
\footnotesize
\setlength{\tabcolsep}{3pt}
\renewcommand{\arraystretch}{1.05}
\begin{tabular}{lcc|cc}
\toprule
& \multicolumn{2}{c}{\underline{Nominal}} & \multicolumn{2}{c}{\underline{Robust}}\\
Method & $J_{\text{opt}}$ & $J_t$ [s] & $J_{\text{opt}}$ & $J_t$ [s] \\
\midrule
COCP & 0.0 & 16.1 & -- & -- \\
CBF-QP & 22896.4 & 36.7 & -- & -- \\
R-COCP & -- & -- & 0.0 & 21.0 \\
\midrule
\textbf{R-CBF-QP} & \textbf{237.1} & \textbf{10.7} & \textbf{198.9} & 26.9 \\
R-CBF-QP ($\gamma_1\!=\!1.4,\gamma_2\!=\!0.3$) & -- & -- & 466.1 & 30.5 \\
R-CBF-QP ($\gamma_{1}(h),\gamma_{2}(h)$) & -- & -- & 312.5 & \textbf{25.3}  \\
\bottomrule
\end{tabular}
\endgroup
\label{tb:sim}
\vskip - 0mm
\end{table}
 
\textbf{Nominal (no state uncertainty):} The vanilla CBF-QP deadlocks when the robot reaches the first obstacle because the barrier involves input $v$ at relative degree one, but not $\omega$.  The filter cannot exploit steering for safety, highlighting the need to address the dual relative degree issue. With no uncertainty, our R-CBF-QP drives the robustness parameters to their lower bounds, realizing ${\Bar{\gamma}_1(\hat{\mb x}) \!\equiv \! 0.0001}$ and ${\Bar{\gamma}_2(\hat{\mb x}) \!\equiv\! 0.0001}$. By incorporating the Lyapunov term in the modified CBF $h$ \eqref{eq:modified_CBF}, our method explicitly integrates vehicle angular velocity into the CBF condition and resolves the deadlock. Moreover, the trajectory of the robot under our R-CBF-QP closely matches COCP while maintaining ${h \!\geq \!0}$, see Fig.~\ref{fig:ocp_cbf_nom}.

\begin{figure*}[t]
    \centering
 \includegraphics[width=0.96\linewidth]{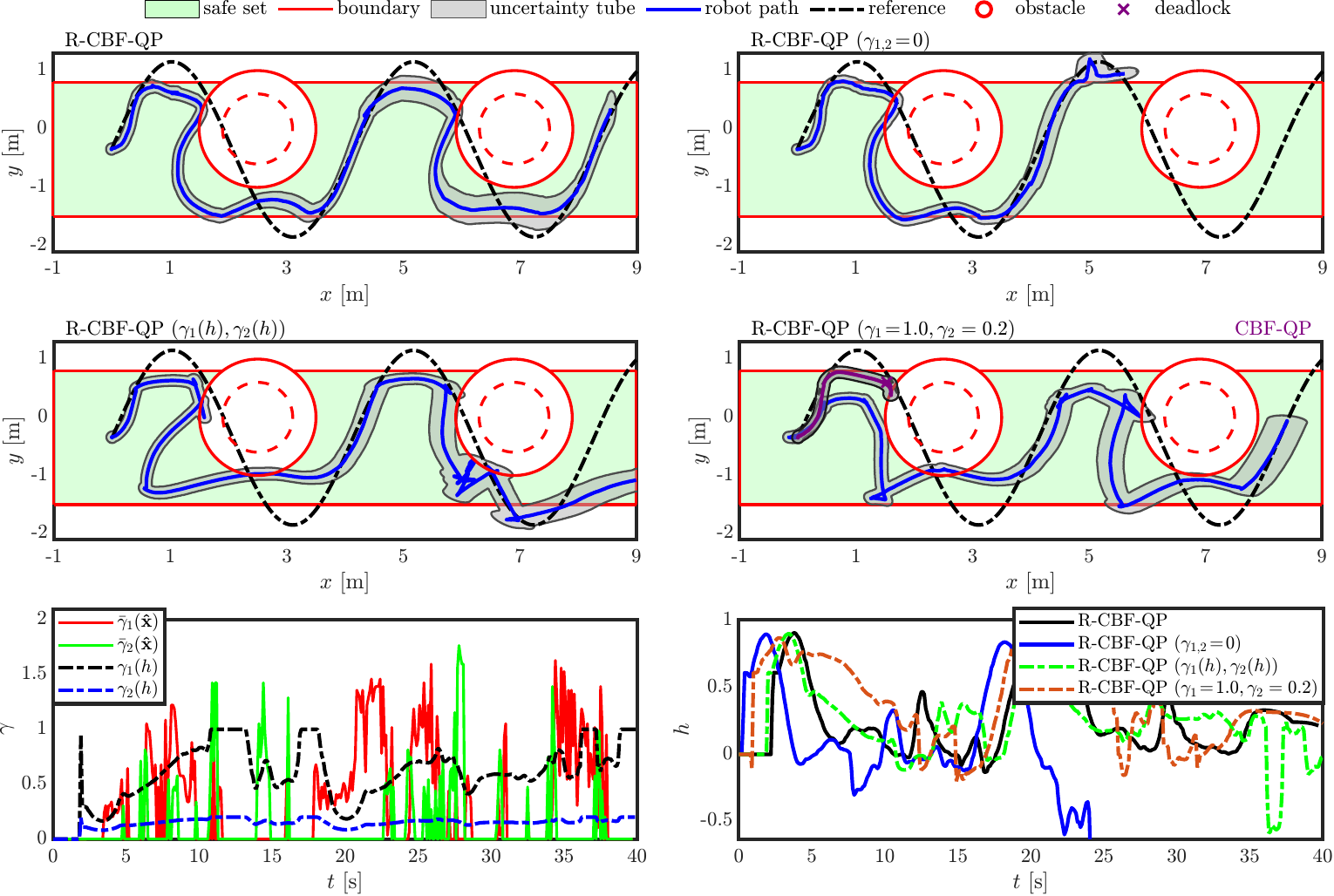}
    \caption{Experimental comparison on tracked robot. Pose is measured at the robot’s geometric center; therefore, each circular obstacle boundary is inflated by half the robot’s length. Dashed red circles denote the true safe-set boundaries, while solid red circles indicate the inflated boundaries used in the controller synthesis. Gray tubes visualize the 95\% confidence envelope of the $(x, y)$ covariance along the executed paths. \textbf{(Left)} (Top) R-CBF-QP with online adaptation of parameters ${\Bar{\gamma}_{1}, \Bar{\gamma}_{2}}$, the proposed method. \textbf{(Left)} (Middle) R-CBF-QP with tunable robustness parameters, i.e., CBF-dependent ${\gamma_{1}(h), \gamma_{2}(h)}$. Tunable R-CBF-QP violated the safety constraint in the presence of time-varying state uncertainty, while its performance was similar to R-CBF-QP with online adaptation in simulation. \textbf{(Left)} (Bottom) Robustness parameters for the adaptive and tunable R-CBF-QPs. \textbf{(Right)} (Middle) R-CBF-QP without robustness parameters, i.e., ${\gamma_{1} \!=\! 0, \gamma_{2} \!=\! 0}$. This non-robust controller violates constraints, and the robot falls from the platform and tips over at ~24 sec. The experiment was terminated after this unsafe behavior. This result illustrates the necessity of robustification in practice. The CBF-QP–controlled robot encounters a deadlock at the boundary of the first obstacle, marked by the purple cross ({\color[rgb]{0.5,0,0.5} $\times$}), since the CBF-QP only uses $v$. And, R-CBF-QP with fixed robustness parameters, i.e., ${\gamma_{1} \!=\! 1.0, \gamma_{2} \!=\! 0.2}$, this controller is more conservative. \textbf{(Right)} (Bottom) The value of CBF $h$ versus time.
    }
   \label{fig:robot}
	 \vskip - 0 mm
 \end{figure*}

\textbf{Robust (bounded state uncertainty):} In the presence of bounded measurement uncertainty, all controllers maintain robust safety, but differ in performance (see Fig.~\ref{fig:ocp_cbf_rob}). The R-CBF-QP with fixed parameters is noticeably conservative, slowing earlier and deviating farther from the reference, as it uses the same robustness parameters everywhere, while the tunable R-CBF-QP and our online-adapted R-CBF-QP track more closely and are comparable to R-COCP near obstacles. Our online adaptive R-CBF-QP increases ${\Bar{\gamma}_1}$ only near a boundary to add robustness against measurement uncertainty, while minimizing its value elsewhere. The value of ${\Bar{\gamma}_2}$ is mostly zero, increases only when safe set inflation is necessary, as explained in Theorem~\ref{the:rcbf}. 

In simulation, our method achieves the smallest optimal cost, ${J_{\text{opt}}}$ with ${Q \!=\!\mathrm{I}}$, ${R \!=\! \mathrm{I}}$, among the CBF-based controllers while also maintaining a competitive deviation period ${J_t}$. In contrast, the tunable R-CBF-QP method yields the lowest deviation period  ${J_t}$, but at the expense of a higher optimal cost ${J_{\text{opt}}}$. The R-COCP method provides a lower bound for this comparison. For further details, please refer to Table~\ref{tb:sim}.

\subsection{Experimental Results and Comparisons}
\label{sec:exp_com}

\begin{table}[t]
\caption{Performance metric $J_t$ for hardware demonstrations.}
\centering
\begingroup
\footnotesize
\setlength{\tabcolsep}{4pt}
\renewcommand{\arraystretch}{1.05}
\begin{tabular}{lc}
\toprule
Method & $J_t$ [s] \\
\midrule
\textbf{R-CBF-QP } & \textbf{16.6} \\
R-CBF-QP ($\gamma_{1,2}\!=\!0$) & 26.0 \\
R-CBF-QP ($\gamma_{1}(h),\gamma_{2}(h)$) & 26.3 \\
R-CBF-QP ($\gamma_1\!=\!1.0,\gamma_2\!=\!0.2$) & 35.8 \\
CBF-QP  & 34.2 \\
\bottomrule
\end{tabular}
\endgroup
\label{tb:exp}
\end{table}

We experimentally investigated the effectiveness of the proposed safe controller on a tracked mobile robot. A photo of the robot on an elevated concrete platform during one experiment that involved two static obstacles is shown in Fig.~\ref{fig: env}. In this safety-critical navigation scenario, the robot may fall if it crosses the blue-taped rectangle. \footnote{See video at: \url{https://youtu.be/tTNu-Etm0SU}} 

Our algorithms operate on a tracked GVR-Bot modified from the iRobot PackBot 510 platform. An NVIDIA Jetson AGX Orin runs our Python and C++ code under the ROS1 framework. The linear and angular body velocity commands are sent from the AGX Orin to an onboard Intel Atom, which converts these commands into individual track speeds, which are regulated through high-rate controllers. 

The robot's perception system uses an Intel RealSense depth camera operating at 30 Hz. A VectorNav VN-100 provides inertial measurements. Our state estimation employs OpenVINS \cite{genevaOpenVINS}, a visual-inertial odometry (VIO) that fuses IMU signals and sparse visual features via an extended Kalman filter. OpenVINS uses first-estimate Jacobians to maintain observability consistency, propagating and updating the full state covariance. The covariance estimates published by OpenVINS provide us with a runtime uncertainty bound. We use a 95\% confidence interval.

We evaluated the performance of four controllers in the scenario shown in Fig. \ref{fig:robot}, in which the safe set, reference path, and obstacle geometry are the same as in simulation. The state uncertainty bound, provided by VIO, is now time–varying. The hardware study used the same hyperparameters as in Section~\ref{sec:sim} except that we employed a coarser $80\times 80$ grid for the tunable parameters and set $\alpha = 1$. On our platform, solving the R-CBF-QP with this grid takes approximately $13~\mathrm{ms}$ per control step, within the $20~\mathrm{ms}$ budget of the $50~\mathrm{Hz}$ control loop. Our online–adapted R-CBF-QP with ${\Bar{\gamma}_{1}, \Bar{\gamma}_{2}}$, consistently preserved safety while staying closest to the reference path. The fixed–gain R-CBF-QP exhibited the expected conservatism (earlier braking and a larger lateral tracking error), while the tunable R-CBF-QP became unsafe in hardware under time-varying state estimation uncertainty. Note that the tunable R-CBF-QP uses the CBF-dependent gains $\gamma_{1}(h),\gamma_{2}(h)$ given in \eqref{eq:tun_par}, which do not adapt to the time-varying $\delta(\hat{\mb x}(t))$; this can leave insufficient robustness on hardware and lead to safety violations. The R-CBF-QP without robustness parameters, i.e., ${\gamma_{1} \!=\! 0, \gamma_{2} \!=\! 0}$ also violated safety and was terminated earlier, see Fig.~\ref{fig:robot}. Quantitative outcomes are presented in Table~\ref{tb:exp}: our online adaptive R-CBF-QP achieves the lowest $J_t$ among safe controllers.

\section{Conclusion} 
\label{sec:conc}
We presented a method to synthesize robust controllers with formal safety guarantees for nonlinear systems that experience state estimation or measurement uncertainties. We tested its performance in a safe trajectory tracking task for a tracked mobile robot. To handle multiple obstacles, we incorporated a unified numerical CBF derived from Poisson’s equation. We further addressed the dual relative degree issue that typically causes deadlocks in path tracking. Experimental trials demonstrated the overall performance improvement of our approach over existing formulations. 

\begin{spacing}{0.95}
\bibliographystyle{IEEEtran}
\bibliography{Bib/refs}
\end{spacing}

\section*{Appendix}
\appendices
\subsection{Robust Constrained Optimal Control}
\label{sec:RCOCP}
To consider the impact of state errors, we pre-compute tightened safety constraints based on a tube model predictive control (MPC) concept \cite{mayne2005robust}. Specifically, we enlarge each circular obstacle by the maximum expected position error, while the edges of the rectangular boundary are shifted inward by the maximum anticipated error in the corresponding direction. The resulting set is referred to as the tightened set, that is, the 0-superlevel set of ${\Bar{h}_i(\mb x)}$, which is defined as the tightened version of our CBFs ${h_i(\mb x)}$ for ${i \!\in\! \{1,\ldots, 6\}}$. Therefore, when ${\Bar{h}_i(\mb x)}\geq 0$, the robot remains within the safe set, even in the presence of estimation errors. Then, we define a robust constrained optimal control problem (R-COCP): 
\begin{align*}
\begin{array}{l}
{{{\mb x}^*(t)}, {\mb u^*(t)} \!=\! \ }
\displaystyle  \argmin_{ \mb{x}(\cdot), \ \mb u(\cdot) } \ \ {\int_0^\infty  \big ( \|\mb u - \mb{k_d}( t, {\mb x}) \|^2} \big ) dt  \\ [4mm]
~~~~~~~\textrm{s.t.} ~~~\dot{\mb x} = \mb{f}(\mb x) + \mb{g}(\mb x) \mb u , \ \mb x(0) \!=\!  \hat{\mb x}(0), \\
~~~~~~~~~~~~~\Bar{h}_i(\mb x(t)) \!\geq\! 0,  \forall t \!\geq\! 0, i \!\in\! \{1,\ldots, 6\}, \\
~~~~~~~~~~~~~ \mb u(t)\in\mathcal{U},
\end{array}
\end{align*}
where ${\mb u^*(t)}$ and ${\mb x^*(t)}$ are the respective optimal input and state signals.

In order to quantify the optimality of trajectories with respect to the constrained optimal control problem, we use a standard cost metric:
\begin{equation*}
   J_{opt} \!\triangleq\! \int_0^\infty  ({\mb x} - {\mb x}^*)^\top Q~\!({\mb x} - {\mb x}^*) + (\mb u - \mb u^*)^\top R~\!(\mb u - \mb u^*) dt ,
\end{equation*}
where the first term penalizes deviations of the states from the optimal trajectory, while the second term penalizes deviations of the control input from the optimal control input. In this context, ${({\mb x}^*, \mb u^*)}$ is obtained from the solution to R-COCP that incorporates tightened safety constraints. 

In addition to the standard cost metric, we introduce another metric that indicates the time optimality for tracking tasks, defined as
\begin{equation*}
   J_{t} \!\triangleq\!  \int_0^\infty \mathds{1}_{\hat{\mathcal{X}}} \left ( \| \hat{\mb x}(t) - \hat{\mb x}_{\mb{k_d}}(t) \| \geq \epsilon_{\mb x} \right )  ~\! dt,
\end{equation*}
where ${\epsilon_{\mb x} \!\geq\! 0}$, ${\hat{\mb x}_{\mb{k_d}}}$ is the state under the reference-tracking controller ${\mb{k_d}}$, and ${\mathds{1}_{\hat{\mathcal{X}}}}$ is the indicator function such that ${\mathds{1}_{\hat{\mathcal{X}}} \left ( \| \hat{\mb x}(t) \!-\! \hat{\mb x}_{\mb{k_d}}(t) \| \geq \epsilon_{\mb x} \right ) \!=\! 1}$ if $ \| \hat{\mb x}(t) \!-\! \hat{\mb x}_{\mb{k_d}}(t) \| \geq \epsilon_{\mb x}$, and $0$ otherwise. Therefore, ${J_{t}}$ represents the time during which the error between the resulting states from the safe control and the states derived from ${\mb{k_d}}$ exceeds ${\epsilon_{\mb x}}$, and quantifies the practical tracking performance. For our safe control problem, this metric considers only the positional error, replacing the full state vector $\hat{\mb x}(t)$ with the robot's position ${[x(t)~y(t)]^\top}$, and the reference states $\hat{\mb x}_{\mb{k_d}}$ with the desired position ${[x_d(t)~y_d(t)]^\top}$.

\subsection{Construction of a Desired Heading Angle}
\label{app:thetas}

Following the approach in \cite{bahati2025control}, the desired heading angle comes from a safety filter for the single integrator dynamics: 
\begin{equation*}
    \begin{bmatrix}
        \dot{x} &
        \dot{y} 
    \end{bmatrix}^\top = \hat{v} 
\end{equation*}
with ${\hat{v} \!\triangleq\! [\hat{v}_x~\hat{v}_y]^\top \!\in\! \mathbb{R}^2}$. In order to derive $\theta_{s}(t,x,y)$, we first introduce a proportional tracking controller for the single integrator dynamics:
\begin{equation}
\label{eq:v_p}
     \hat{v}_p(t,x,y) = K_v \!~  \mb{r_e}(t, \mb x) .
\end{equation}
A safety filter using the classic CBF-based QP is given by
\begin{equation*}
\begin{aligned}
    \hat{v}_{s} (t,x,y)  & = \argmin_{\hat{v}} \| \hat{v} - \hat{v}_p(t,x,y) \| ^2  \\
    & \text{s.t.} \  \grad h_0 \cdot \hat{v} +  \alpha h_0(x,y) \geq 0.
\end{aligned}
\end{equation*}
This gives an explicit solution: 
\begin{equation*}
   \hat{v}_{s} (t,x,y)  \!=\! \hat{v}_p + \max \!\left (  -(\grad h_0 \cdot \hat{v}_p \!+\!  \alpha h_0)/\| \grad h_0\|^2,0 \right )  \grad h_0^\top. 
\end{equation*}
However, due to the $\max$ function, the above solution may not be differentiable. Instead, a smooth safety filter was proposed in \cite{cohen2023characterizing} that has an explicit form:
\begin{equation}
\label{eq:v_safe}
 \hat{v}_{s} (t,x,y) \triangleq [\hat{v}_{s,x}~\hat{v}_{s,y}]^\top = \hat{v}_p + \lambda(a,b)\grad h_0^{\top}
\end{equation}
with ${a(t,x,y) \!\triangleq\!  \grad h_0 \cdot \hat{v}_p \!+\!  \alpha h_0(x,y)}$, ${b(x,y) \!\triangleq\! \| \grad h_0\|^2}$ and function $\lambda$ that satisfies certain conditions. One specific choice of the function $\lambda$ is the half Sontag formula:
\begin{equation*}
      \lambda(a,b) =  \frac{-a + \sqrt{a^2 + q(b) \!~ b}}{2b} , 
\end{equation*}
where ${q \!:\! \mathbb{R} \!\to\! \mathbb{R}}$, ${q(0) \!=\! 0}$, and ${q(b) \!>\! 0}$ for all ${b \!\neq\! 0}$. 
We use ${q(b) \!\triangleq\! \alpha_q b}$ with a parameter ${\alpha_q \!>\! 0}$. Then, the safe heading angle is obtained as
\begin{equation}
\label{eq:theta_safe}
    \theta_{s}(t,x,y) = \atan(\hat{v}_{s,y}, \hat{v}_{s,x}).
\end{equation}


\end{document}